\newtheorem{theorem}{Theorem}[section]
\newtheorem{lemma}[theorem]{Lemma}
\newcommand\numberthis{\addtocounter{equation}{1}\tag{\theequation}}
\newcommand{\R}{\mathbb{R}}
\newcommand{\C}{\mathbb{C}}
\newcommand{\s}{\mathbb{S}}
\newcommand{\abs}[1]{\lvert#1\rvert}
\newcommand{\norm}[1]{\lVert#1\rVert}
\newcommand{\bvec}[1]{\boldsymbol{#1}}
\newcommand{\inprodtwo}[2]{\left \langle #1 , #2\right \rangle}
\newcommand{\innorm}[1]{\langle #1 \rangle}
\newcommand{\dotprod}[2]{
  \bvec{#1}\mkern1mu{\cdot}\mkern1mu\bvec{#2} \,}
\newcommand{\h}{\mathcal{H}}
\newcommand{\hB}{\widetilde{\mathcal{H}}}
\newcommand{\Dom}[1]{\mathcal{D}( #1)}
\newcommand{\Q}[1]{\mathcal{Q}(#1)}
\newcommand{\Pa}{\mathbf{P}_\mathbf{A}}
\newcommand{\Sa}{\mathbf{S}_\mathbf{A}}
\newcommand{\Paux}{\widetilde{\mathbf{P}}_\mathbf{A}}
\newcommand{\sigvec}{\boldsymbol{\sigma}}
\newcommand{\curl}{\operatorname{curl}}
\title{A  criterion for the existence of zero modes for the Pauli operator with fastly decaying fields}
\author{Rafael D. Benguria and Hanne Van Den Bosch 
\\  {\small Pontificia Universidad Cat\'{o}lica de Chile} \\
{\small Av. Vicu\~{n}a Mackenna 4860},
{\small Santiago (Chile)}
}
\begin{document}

\maketitle

\begin{abstract}
 We consider the Pauli operator in $\R^3$ for magnetic fields in $L^{3/2}$ that decay at infinity as $\abs{x}^{-2-\beta}$ with $\beta > 0$.
 In this case we are able to prove that the existence of a zero mode for this operator 
 is equivalent to a quantity $\delta(\bvec B)$, defined below, being equal to zero.
 Complementing a result from \cite{BalinskyEvansLewis}, this implies that for the class of magnetic fields considered,
 Sobolev, Hardy and CLR inequalities hold whenever the magnetic field has no zero mode.
\end{abstract}

\section{Introduction}
Consider the Pauli operator $\Pa$ acting on $L^2(\R^3, \C^2) \equiv \h$,
formally defined by 
\[
\Pa = (\mathbf{p}-\mathbf{A})^2 - \sigvec \cdot \mathbf{B}
\]
where $\mathbf{B} =  \curl \mathbf{A}$.
In appropriate units, this operator describes the kinetic energy of a non-relativistic electron in the magnetic field $\bvec B$.
We will also need the Schrödinger operator $\Sa = (\mathbf{p-A})^2$, which gives the kinetic energy of a spinless particle in a magnetic field.
An element of the kernel of $\Pa$ is called a zero mode for the corresponding Pauli operator.

The importance of zero modes for the Pauli operator was first pointed out in \cite{FrohlichLiebLoss}, 
where the authors realized that their existence would imply a critical value of the nuclear charge $Z$
in order to have a bounded ground state energy for a one-electron atom in a magnetic field.
In \cite{LossYau}, the first examples of magnetic fields producing zero modes were given.
Further examples were given in \cite{AdamMuratoriNash1999, AdamMuratoriNash2000, Elton2000, ErdosSolovej}. 
\cite{AdamMuratoriNash2000} provides explicit examples of magnetic fields with an arbitrary number of zero modes 
while in \cite{Elton2000} a compactly supported magnetic field having a zero mode is constructed.
In \cite{ErdosSolovej} the authors use a geometrical approach which allows, for a certain class of magnetic fields on $\R^3$,
to relate the problem to the one on $\s^2$, which is better understood.

All of the above papers deal with the problem of describing the kernel of the Pauli operator for fixed magnetic fields.
A different point of view is adopted in \cite{BalinskyEvans} and \cite{Elton2002}.
In these cases the authors describe the set of magnetic fields producing zero modes, 
in \cite{BalinskyEvans} for $\bvec B \in L^{3/2}$ and in \cite{Elton2000} for continuous $\bvec A$ decaying as $o(\abs{x}^{-1})$.
Both authors reach the conclusion that magnetic fields on $\R^3$ producing zero modes are rather \emph{rare}
which contrasts heavily with the situation in $\R^2$.

The existence of zero modes for the Pauli operator makes 
it impossible to use the kinetic energy of a wave function to control its potential energy
as it is done for (magnetic) Schrödinger operators by Hardy's inequality or the CLR-bound (\cite{Cwikel, Lieb1980, Rozenbljum}).
However, in \cite{BalinskyEvansLewis} it was shown 
that it is still possible to obtain this type of bounds for certain magnetic fields.
Here, the goal is to give a more precise description of the class of magnetic fields for which this bound holds.
In order to make this statement precise, we first need to review some results of \cite{BalinskyEvans, BalinskyEvansLewis}.

If $\abs{\mathbf{B}} \in L^q$ for some $q \in [\frac{3}{2}, \infty]$, $\Sa$ and $\Pa$ have the same form domain $\Q{\Sa}$.
Both operators can be defined as Friedrich's extensions of the respective quadratic forms.
In addition, we will need the operator $\Paux \equiv \Pa + \abs{\bvec{B}} $, with the same form domain.
Since $\Paux \geq \Sa$, $\mathrm{ker} (\Paux) = \{0\}$, so its range is dense in $\h$.
The auxiliary Hilbert space $\hB$ is defined as the completion of $\Q{\Sa}$ with respect to the norm
\[
\norm{u}_{\hB}^2 = \bigl(u,\Paux u  \bigr).
\]
This space is not a subspace of $\h$.
Its definition ensures $\Paux^{-1/2}$ considered as an operator from 
$\mathrm{Ran}(\Paux^{1/2} )$ to $\hB$ preserves norms.
As previously remarked, its domain is dense in $\h$.
On the other hand, $\mathrm{Ran} ( \Paux^{- 1/2}) = \Dom{ \Pa^{1/2}} = \Q{\Pa} = \Q{\Sa}$, which is dense in $\hB$ by construction.
This means $\Paux^{-1/2}$ can be extended to a unitary operator $U$ from $\h$ to $\hB$.
Multiplication by $\abs{\bvec{B}}^{1/2}$ is a bounded operator from $\hB$ to $\h$.
This allows us to define 
\begin{align*}
 &S = \abs{\bvec{B}}^{1/2} U : \h \rightarrow \h,  \\
 &S = \abs{\bvec{B}}^{1/2} (\Pa + \abs{\bvec{B}})^{-1/2} \quad \text{ on } \mathrm{Ran}(\Paux^{1/2}).
\end{align*}
Finally, define
 \begin{equation} \label{eq : def_delta}
  \delta({\bvec{B}}) = \inf \limits_{\norm{f}=1, Uf \in \h} \norm{(1-S^*S)f}.
 \end{equation}

 With these definitions, we can state the main result.
\begin{theorem} \label{thm : main}
 If $\bvec B \in L^{3/2}$ is such that $\delta({\bvec{B}}) = 0$ 
 and there exists $\beta > 0$, $C \geq 0$ and $r_0 \geq 0$ such that
 \[
 \abs{\bvec B} (x) \leq C \abs{x}^{- 2 - \beta}
 \]
 for all $\abs{x} \geq r_0$,
 then the associated Pauli operator $\Pa$ has a zero mode.
\end{theorem}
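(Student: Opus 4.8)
The plan is to convert the condition $\delta(\bvec B)=0$ into the spectral statement that $1$ is an eigenvalue of the bounded self‑adjoint operator $S^*S=\Paux^{-1/2}\abs{\bvec B}\Paux^{-1/2}$, and then to show that the resulting eigenvector yields a genuine $L^2$ zero mode. The algebraic heart is the reformulation
\[
\Pa\psi=0\iff\Paux\psi=\abs{\bvec B}\psi\iff g=\Paux^{-1/2}\abs{\bvec B}\Paux^{-1/2}g=S^*S\,g,\qquad \psi=Ug=\Paux^{-1/2}g,
\]
so that zero modes correspond exactly to eigenvectors of $S^*S$ with eigenvalue $1$. Since $\Pa=\bigl(\sigvec\cdot(\mathbf p-\mathbf A)\bigr)^2\ge 0$, one has $\Paux\ge\abs{\bvec B}$ as forms and hence $S^*S\le 1$, so $1$ sits at the top of the spectrum and $\delta(\bvec B)=\inf\norm{(1-S^*S)f}$ measures its distance from that spectrum. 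Because the constrained infimum dominates the unconstrained one, $\delta(\bvec B)=0$ forces $\mathrm{dist}\bigl(1,\sigma(S^*S)\bigr)=0$, i.e. $1\in\sigma(S^*S)$.

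The first real step is to upgrade ``$1\in\sigma(S^*S)$'' to ``$1$ is an eigenvalue''. I would prove that $S=\abs{\bvec B}^{1/2}\Paux^{-1/2}$ is compact on $\h$: approximating $\abs{\bvec B}$ by bounded, compactly supported potentials and using the diamagnetic domination $\Paux\ge\Sa$ together with the Sobolev bound $\norm{\abs{\bvec B}^{1/2}\Sa^{-1/2}}\lesssim\norm{\abs{\bvec B}}_{3/2}^{1/2}$ exhibits $S$ as a norm limit of compact operators; the pointwise decay $\abs{\bvec B}(x)\le C\abs x^{-2-\beta}$ guarantees $\sigma_{\mathrm{ess}}(S^*S)=\{0\}$. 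Consequently the spectral value $1$ is an isolated eigenvalue of finite multiplicity, with a normalized eigenvector $g\in\h$, and $\psi:=Ug\in\hB$ satisfies $\Pa\psi=0$ with $\psi\neq 0$.

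The main obstacle is that $\hB$ is \emph{not} contained in $\h$, so a priori $\psi$ lives only in $\hB$: it satisfies $(\mathbf p-\mathbf A)\psi\in L^2$ and $\abs{\bvec B}^{1/2}\psi\in L^2$, whence $\abs\psi\in\dot H^1\hookrightarrow L^6$ by the diamagnetic and Sobolev inequalities, but this does not give $\psi\in L^2$. This is exactly where the decay hypothesis must be used, and where I expect the genuine difficulty to lie. The naive route through Kato's inequality $-\Delta\abs\psi\le\abs{\bvec B}\abs\psi$ only produces the Newtonian tail $\abs{\psi(x)}\lesssim\abs x^{-1}$ of a nonnegative source, which is borderline \emph{not} square integrable; the scalar comparison is too lossy, so one must use the first‑order equation $\sigvec\cdot(\mathbf p-\mathbf A)\psi=0$ rather than its square.

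To beat the $\abs x^{-1}$ barrier I would exploit the conformal covariance of the Weyl–Dirac operator $\sigvec\cdot(\mathbf p-\mathbf A)$ under the inversion $x\mapsto x/\abs x^2$. Magnetic fields carry scaling weight $2$, so the hypothesis $\abs{\bvec B}\lesssim\abs x^{-2-\beta}$ says precisely that the inverted field is bounded near the origin by $\abs y^{-2+\beta}$, hence subcritical and locally $L^{3/2}$; the inverted spinor then solves a zero‑mode equation for a field that is harmless at $0$, so it is bounded there, and undoing the inversion yields $\abs{\psi(x)}\lesssim\abs x^{-2}$ at infinity, which is square integrable in $\R^3$. (Equivalently, without conformal language: the free equation $\sigvec\cdot\mathbf p\,\psi=0$ admits no solution decaying like $\abs x^{-1}$ — its slowest‑decaying solutions, such as $\sigvec\cdot x\,\abs x^{-3}$, are of order $\abs x^{-2}$ — and since $\mathbf A\lesssim\abs x^{-1-\beta}$ is a subcritical perturbation at infinity exactly when $\beta>0$, the $\abs x^{-1}$ monopole term is never generated.) Making this decay analysis rigorous with only an $L^{3/2}$ plus pointwise bound on $\bvec B$, and thereby closing the gap between $\hB$ and $\h$, is the hardest part; once it is done, $\psi\in\h$ is a genuine zero mode of $\Pa$, completing the proof.
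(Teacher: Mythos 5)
Your first half is essentially sound and runs parallel to the paper's: the paper's Lemma \ref{lemma : simplified_var_prob} converts $\delta(\bvec B)=0$ into the vanishing of the infimum of $(g,\Pa g)/(g,\abs{\bvec B}g)$, and its Lemma 3.3 extracts a weak limit of a minimizing sequence to produce a nonzero $g\in W^{1,2}_{\mathrm{loc}}\cap L^6$ with $\sigvec\cdot(\bvec p-\bvec A)g=0$; your spectral formulation (compactness of $S$, so $1\in\sigma(S^*S)$ is an eigenvalue, $\psi=Ug$) reaches the same candidate zero mode by an equivalent route, since the paper's passage to the weak limit also tacitly uses the compactness of $\abs{\bvec B}^{1/2}$ as a map from the form domain to $\h$. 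You also correctly identify the real obstruction: the candidate lives in $\hB$ (equivalently in $L^6$), not in $\h$, and the scalar Kato-type comparison cannot beat the $\abs{x}^{-1}$ barrier.

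But that obstruction is the actual content of the theorem, and your treatment of it is a heuristic, not a proof. Two concrete gaps. First, the decay hypothesis is on $\bvec B$, not on $\bvec A$; without fixing a gauge the statement ``$\abs{\bvec A}\lesssim\abs{x}^{-1-\beta}$'' is meaningless, and even in the transverse gauge \eqref{eq : def_A} the paper's Lemma \ref{lemma : B-decay->A-decay} only yields $\abs{\bvec A}(x)\leq C\abs{x}^{-1-\alpha}$ with $\alpha=\min(1/2,\beta/2)$ — this step must be done and you have not done it. Second, the inversion argument as stated assumes what has to be proved: the inverted spinor is a priori only a distributional solution on a punctured neighbourhood of the origin with limited integrability there, and concluding that it is \emph{bounded} at $0$ (i.e.\ that the singularity is removable and no $\abs{x}^{-1}$ monopole tail is generated by the perturbation $\bvec A$) is precisely the hard analytic step. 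The paper replaces this by Lemma \ref{lemma : integrability}: decompose $g=g_++g_-$ along the positive and negative eigenspaces of $K=-1-\sigvec\cdot\bvec L$, derive the coupled radial inequalities \eqref{eq : integ_f}--\eqref{eq : integ_h}, and bootstrap the a priori decay $\innorm{g_\pm}(r)\lesssim r^{-3/p}$ up to $r^{-2}$ in finitely many steps, with Lemma \ref{lemma : regularity} supplying the regularity needed to run the ODE argument for a merely $W^{1,2}_{\mathrm{loc}}$ solution. Until you supply an argument of comparable substance (your conformal route could in principle work, but needs a removable-singularity theorem for the Weyl--Dirac operator with an $L^{3/2}_{\mathrm{loc}}$ subcritical field), the proof is incomplete at its decisive step.
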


We do not know whether the condition on the decay of $\bvec B$ is optimal. 
In any case it can be replaced by the condition on the vector potential $\bvec{A}$ in hypothesis of lemma \ref{lemma : integrability}.
Our method does not work without this additional decay of $\bvec A$.

The quantity $\delta(\bvec B)$ was introduced in \cite{BalinskyEvansLewis} were the following result was proven:
\begin{theorem}[Balinsky, Evans, Lewis, \cite{BalinskyEvansLewis}]
 If $\bvec{B} \in L^{3/2}$, then
 \begin{equation} \label{eq : BEL}
  \Pa \geq \delta(\bvec{B}) \, \Sa.
 \end{equation}

\end{theorem}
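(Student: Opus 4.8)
The plan is to work entirely with the quadratic forms on the common form domain $\Q{\Sa} = \Q{\Pa}$ and to reduce the operator inequality \eqref{eq : BEL} to a lower bound on the bounded, nonnegative operator $T := 1 - S^*S$ acting on $\h$.

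First I would record the two elementary operator inequalities that make the scheme work. Because $\Pa = [\sigvec\cdot(\mathbf{p}-\mathbf{A})]^2 \geq 0$ and, for each $x$, the Hermitian matrix $\sigvec\cdot\bvec B(x)$ has eigenvalues $\pm\abs{\bvec B}(x)$, one has $\sigvec\cdot\bvec B \leq \abs{\bvec B}$ as a multiplication operator. Hence $\Paux - \Sa = \abs{\bvec B} - \sigvec\cdot\bvec B \geq 0$, so $\Sa \leq \Paux$, while $\abs{\bvec B} \leq \Pa + \abs{\bvec B} = \Paux$. The second inequality, after conjugating by $\Paux^{-1/2}$, shows $S^*S = \Paux^{-1/2}\abs{\bvec B}\Paux^{-1/2} \leq 1$, so that $T$ is self-adjoint with $0 \leq T \leq 1$.

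Next I would identify $\delta(\bvec B)$ with $\inf\mathrm{spec}(T)$. For every $f \in \mathrm{Ran}(\Paux^{1/2})$ one has $Uf = \Paux^{-1/2}f \in \h$, so the admissible set in \eqref{eq : def_delta} contains a dense subspace of $\h$; since $T$ is bounded, $f\mapsto\norm{Tf}$ is continuous, and therefore the constrained infimum equals the unconstrained one, $\delta(\bvec B) = \inf_{\norm f = 1}\norm{Tf}$. For a bounded nonnegative self-adjoint operator the latter equals $\inf\mathrm{spec}(T)$, because $\norm{Tf}^2 = \inprodtwo{f}{T^2 f}$ and $\inf\mathrm{spec}(T^2) = (\inf\mathrm{spec}(T))^2$. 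In particular $T \geq \delta(\bvec B)$, i.e. $\inprodtwo{g}{Tg} \geq \delta(\bvec B)\norm{g}^2$ for all $g \in \h$.

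The conclusion then follows from a short variational computation. Given $u \in \Q{\Sa}$, put $g = \Paux^{1/2}u \in \h$, so that $u = Ug$, $\norm{g}^2 = \inprodtwo{u}{\Paux u}$, and $\abs{\bvec B}^{1/2}u = Sg$. Splitting $\Pa = \Paux - \abs{\bvec B}$ gives $\inprodtwo{u}{\Pa u} = \norm{g}^2 - \norm{Sg}^2 = \inprodtwo{g}{Tg}$, and combining this with the operator bound from the previous step together with $\inprodtwo{u}{\Sa u} \leq \inprodtwo{u}{\Paux u} = \norm{g}^2$ yields $\inprodtwo{u}{\Pa u} = \inprodtwo{g}{Tg} \geq \delta(\bvec B)\norm{g}^2 \geq \delta(\bvec B)\inprodtwo{u}{\Sa u}$, which is \eqref{eq : BEL}. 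The main obstacle is not any single inequality but the functional-analytic bookkeeping surrounding $\Paux^{\pm 1/2}$, $U$, and the auxiliary space $\hB$: one has to check that $g = \Paux^{1/2}u$ genuinely lies in $\h$ for $u$ in the form domain, justify the identity $S^*S = \Paux^{-1/2}\abs{\bvec B}\Paux^{-1/2}$ and the relation $\abs{\bvec B}^{1/2}u = Sg$ first on $\mathrm{Ran}(\Paux^{1/2})$ and then by continuity, and verify the density statement used to pass from the constrained to the unconstrained infimum. It is worth stressing that a naive Cauchy--Schwarz estimate only delivers $\Pa \geq \delta(\bvec B)^2 \Sa$; the sharp linear dependence requires precisely the spectral identification $\delta(\bvec B) = \inf\mathrm{spec}(T)$, which upgrades the norm bound $\norm{Tg} \geq \delta(\bvec B)\norm{g}$ to the form bound $\inprodtwo{g}{Tg} \geq \delta(\bvec B)\norm{g}^2$.
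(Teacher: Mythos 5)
The paper does not prove this statement --- it is quoted from \cite{BalinskyEvansLewis} --- so there is no internal proof to compare against; judged on its own, your argument is correct and is essentially the original Balinsky--Evans--Lewis argument. The chain $\inprodtwo{u}{\Pa u} = \norm{g}^2 - \norm{Sg}^2 = \inprodtwo{g}{(1-S^*S)g}$ with $g = \Paux^{1/2}u$, the reduction of the constrained infimum to the unconstrained one via density of $\mathrm{Ran}(\Paux^{1/2})$, and the comparison $\Sa \leq \Paux$ are exactly the identities the present paper itself exploits (in reverse) in the proof of Lemma \ref{lemma : simplified_var_prob}. Your closing remark is also the right one to flag: the passage from $\inf\norm{(1-S^*S)f}$ to the form bound $1-S^*S \geq \delta(\bvec B)$ genuinely needs the spectral theorem together with $0 \leq 1-S^*S \leq 1$, and without it one only gets the weaker $\Pa \geq \delta(\bvec B)^2\,\Sa$.
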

If $\delta({\bvec{B}})>0$, this result allows to deduce for instance a Hardy inequality for $\Pa$.
If the Pauli operator corresponding to the magnetic field $\bvec{B}$ has a zero mode, then $\delta({\bvec{B}}) = 0$.
The content of theorem \ref{thm : main} is precisely the converse of this. 
For magnetic fields that decrease sufficiently fast at infinity,
$\delta({\bvec{B}}) = 0$ implies the existence of a zero mode for the corresponding Pauli operator.
Unfortunately, inequality \eqref{eq : BEL} still contains the positive but unknown quantity $\delta({\bvec{B}})$.

The remainder of this paper contains the proof of theorem \ref{thm : main}. 
The next section contains some preliminary lemmas while the third section concludes the proof.

\section{Simplifying the problem}

To prove theorem \ref{thm : main} we will first simplify the statement,
by reducing the condition $\delta(\bvec{B}) = 0$ to a simpler one 
and changing the hypothesis on the decay of $\bvec{B}$ into a hypothesis on $\bvec A$.
This is done in the following two lemmas.

\begin{lemma} \label{lemma : simplified_var_prob}
 If $\delta({\bvec{B}}) = 0$, then
 \begin{equation} \label{eq : simplified_var_prob}
  \inf \limits_{\substack{g \in \Q{\Sa}  \\ (g, \abs{\bvec{B}} g) \neq 0}} \frac{(g,\Pa g)}{(g, \abs{\bvec{B}} g)} = 0.
 \end{equation}
\end{lemma}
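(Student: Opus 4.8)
The plan is to translate the quantity $\delta(\bvec B)$ into the language of quadratic forms and then read off the claimed variational infimum directly. The starting observation is that $S$ is a contraction: for $f \in \h$ we have $\norm{Sf}^2 = (Uf, \abs{\bvec B}\, Uf) \le (Uf, \Paux\, Uf) = \norm{f}^2$, since $\Paux = \Pa + \abs{\bvec B} \ge \abs{\bvec B}$. Hence $0 \le S^*S \le 1$ and $0 \le 1 - S^*S \le 1$. The key computation is that, for $g \in \Q{\Sa}$ and $f = \Paux^{1/2} g$ (so that $Uf = \Paux^{-1/2}f = g$ and, being in $\Q{\Sa}\subset\h$, the constraint $Uf\in\h$ holds), one has
\begin{align*}
 \norm{f}^2 &= (g, \Paux g) = (g, \Pa g) + (g, \abs{\bvec B} g), \\
 \norm{Sf}^2 &= (g, \abs{\bvec B} g), \\
 \left((1-S^*S)f, f\right) &= (g, \Pa g).
\end{align*}
Thus the ratio in \eqref{eq : simplified_var_prob} is exactly $(g,\Pa g)/(g,\abs{\bvec B}g)$, and its smallness is governed by the Pauli form $(g,\Pa g)$ relative to the total norm $\norm{f}^2$.

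Next I would pass from the operator norm appearing in \eqref{eq : def_delta} to the quadratic form above. Since $\norm{f} = 1$, Cauchy--Schwarz gives $0 \le \left((1-S^*S)f, f\right) \le \norm{(1-S^*S)f}$, so any sequence that drives $\norm{(1-S^*S)f_n}$ to $0$ also drives $\left((1-S^*S)f_n, f_n\right)$ to $0$. This is the only place where the transition between the norm in \eqref{eq : def_delta} and the form in the identities is needed, and the contraction property makes it one-directional and harmless.

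The delicate point is that the infimum defining $\delta(\bvec B)$ ranges over all unit $f$ with $Uf \in \h$, whereas the identities above require $g = Uf$ to lie in the genuine form domain $\Q{\Sa}$, i.e. $f \in \mathrm{Ran}(\Paux^{1/2})$. I would resolve this by a density argument: $\mathrm{Ran}(\Paux^{1/2})$ is dense in $\h$ (because $\mathrm{ker}\,\Paux = \{0\}$), it is contained in the admissible set $\{f : Uf \in \h\}$, and $1 - S^*S$ is bounded; therefore the infimum in \eqref{eq : def_delta} is unchanged if restricted to unit vectors $f \in \mathrm{Ran}(\Paux^{1/2})$. I expect this reduction---making sure one may work with honest form-domain vectors rather than abstract elements of the completion $\hB$---to be the main obstacle, and the place where the properties of $\Paux$ (trivial kernel, dense range) are genuinely used.

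With this in hand the conclusion is immediate. Assuming $\delta(\bvec B) = 0$, I pick $f_n \in \mathrm{Ran}(\Paux^{1/2})$ with $\norm{f_n} = 1$ and $\norm{(1-S^*S)f_n} \to 0$, and set $g_n = U f_n \in \Q{\Sa}$. Writing $a_n = (g_n, \Pa g_n)$ and $b_n = (g_n, \abs{\bvec B} g_n)$, the identities give $a_n + b_n = 1$ together with $a_n = \left((1-S^*S)f_n, f_n\right) \le \norm{(1-S^*S)f_n} \to 0$. Hence $b_n \to 1$; in particular $b_n \neq 0$ for large $n$, and $a_n / b_n \to 0$. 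Since $\Pa \ge 0$ forces every admissible ratio to be nonnegative, this shows the infimum in \eqref{eq : simplified_var_prob} equals $0$.
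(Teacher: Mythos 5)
Your proposal is correct and follows essentially the same route as the paper: both arguments restrict by density to $f \in \mathrm{Ran}(\Paux^{1/2})$, use Cauchy--Schwarz to pass from $\norm{(1-S^*S)f_n} \to 0$ to $\bigl(f_n,(1-S^*S)f_n\bigr) \to 0$, and exploit the identity $\norm{f}^2 = (g,\Pa g) + (g,\abs{\bvec B}g)$ with $g = Uf$. The paper phrases the conclusion as $\sup\norm{Sf} = 1$ and inverts the resulting ratio, while you track $a_n = (g_n,\Pa g_n)$ and $b_n = (g_n,\abs{\bvec B}g_n)$ directly, but this is only a cosmetic difference.
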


\begin{proof}
 First, observe that if
 \[
 \inf \limits_{\norm{f}=1, Uf \in \h} \norm{(1-S^*S)f} = 0,
 \]
 then
 \[
 \sup \limits_{\norm{f}=1, Uf \in \h} \norm{Sf} = 1.
 \]
To see this, first notice that for any $f \in \h$, $\norm{Sf} \leq \norm{f}$, so the $sup$ in the above expression is at most $1$.
Now if $f_n$ is a minimizing sequence for the first problem,
\[
(1- S^* S)f_n \rightarrow 0 \text{ in } L^2
\]
 so in particular
 \[
 \bigl(f_n,(1- S^* S) f_n \bigr) \rightarrow 0.
 \]
 This means $\norm{Sf_n}^2= \bigl(f_n, S^*S f_n\bigr) \to 1$.
 
 Since the range of $\Paux$ is dense in $\h$ and $S$ is bounded, 
 nothing is lost by restricting the $\sup$ to functions $f \in \mathrm{Ran}(\Paux^{1/2})$.
 For these functions the condition $Uf \in \h$ is trivially satisfied. The problem can then be rewritten in terms of $g = Uf$:
 \begin{align*}
 1 =  \sup \limits_{\norm{f}=1, Uf \in \h} \norm{Sf} 
	      &= \sup\limits_{f \in \mathrm{Ran}(\Paux^{1/2})\setminus \{0\}}
			    \frac{ \norm{Sf}}{\norm{f}} \\
	      &= \sup\limits_{g \in \Dom{\Paux^{1/2}}\setminus \{0\}}
			      \frac{ \norm{\abs{\bvec{B}}^{1/2}g} }{ \norm{\Paux^{1/2}g} } \\
 \end{align*}
The result is obtained by expanding ${ \norm{\Paux^{1/2}g} }^2 = (g, \Pa g) + (g, \abs{\bvec{B}} g) $ and using $\Dom{\Paux^{1/2}} = \Q{\Sa}$:
\[
 1 = \sup \limits_{\norm{f}=1, Uf \in \h} \norm{Sf}^2 
	     	      = \sup\limits_{g \in \Q{\Sa}\setminus \{0\}}
			      \left( \frac{(g, \Pa g)}{(g, \abs{\bvec{B}} g)} + 1 \right)^{-1},
\]
which is only possible if
\[
 \inf \limits_{\substack{g \in \Q{\Sa}  \\ (g, \abs{\bvec{B}} g) \neq 0}} \frac{(g,\Pa g)}{(g, \abs{\bvec{B}} g)} = 0. \qedhere
 \]
\end{proof}

Then, we show that the imposed decay of $\bvec B$ implies 
a good decay of $\bvec A$ if we fix the gauge

\begin{equation} \label{eq : def_A}
 \frac{1}{4 \pi}  \bvec A(x) \equiv \int \frac{x-y}{\abs{x-y}^3} \times \bvec B (y) dy .
\end{equation}
Note that $\bvec A$ as defined above is in $L^3$ by the weak Young inequality.

\bigskip

\begin{lemma} \label{lemma : B-decay->A-decay}
 If $\bvec B \in L^{3/2}$ is such that there exists $\beta > 0$, $C_B \geq 0$ and $r_0 \geq 0$ such that
 \[
 \abs{\bvec B} (x) \leq C_B \abs{x}^{- 2 - \beta}
 \]
 for all $\abs{x} \geq r_0$,
 then  there exist $r_1\geq r_0$ and $C_A$ such that
 \[
\abs{\bvec A}(x) \equiv 4 \pi \bigl| \int \frac{x-y}{\abs{x-y}^3} \times \bvec B (y) dy \bigr| \leq C_A \abs{x}^{-1-\alpha}
 \]
for $\alpha = \min(1/2, \beta/2)$ and all $\abs{x} \geq r_1$
\end{lemma}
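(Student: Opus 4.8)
The plan is to reduce the vector-valued estimate to a scalar convolution bound and then split the domain of integration according to the size of $\abs{y}$ relative to $R := \abs{x}$. Since the Biot--Savart kernel has magnitude $\abs{x-y}^{-2}$, the triangle inequality gives
\[
\abs{\bvec A}(x) \leq 4\pi \int_{\R^3} \frac{\abs{\bvec B}(y)}{\abs{x-y}^2}\, dy =: 4\pi\, I(x),
\]
so the whole problem becomes showing $I(x) \lesssim R^{-1-\alpha}$ for $R$ large. First I would fix $R \geq 2r_0$ and decompose $\R^3$ into three pieces: the inner region $\{\abs{y} \leq R/2\}$, the near-diagonal annulus $\{R/2 < \abs{y} < 2R\}$, and the far region $\{\abs{y} \geq 2R\}$.

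On the inner region one has $\abs{x-y} \geq R/2$, so the kernel contributes a factor $4R^{-2}$ and the remaining integral is $\int_{\abs{y}\leq R/2}\abs{\bvec B}$. I would split this mass into the fixed core $\abs{y}\leq r_0$, where $\bvec B \in L^{3/2}$ gives a finite constant by H\"older on a bounded set, and the annulus $r_0 < \abs{y} \leq R/2$, where the hypothesis yields $\int_{r_0}^{R/2} r^{-\beta}\,dr$; this is bounded (plus a possible $\log$) when $\beta \geq 1$ and is $O(R^{1-\beta})$ when $\beta < 1$. In either case the inner contribution is $O(R^{-2})$ or $O(R^{-1-\beta})$, both dominated by $R^{-1-\alpha}$. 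On the far region, $\abs{y} \geq 2R = 2\abs{x}$ forces $\abs{x-y}\geq \abs{y}/2$, so the integrand is bounded by $4C_B\,\abs{y}^{-4-\beta}$, and integrating over $\abs{y}\geq 2R$ produces $O(R^{-1-\beta})$.

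The main obstacle is the near-diagonal annulus, where the kernel $\abs{x-y}^{-2}$ is singular at $y=x$ and no longer controlled by powers of $R$. The key observation is that throughout this region $\abs{y} > R/2 > r_0$, so the decay hypothesis applies pointwise and gives the uniform bound $\abs{\bvec B}(y) \leq C_B (R/2)^{-2-\beta} = C\,R^{-2-\beta}$. Pulling this constant out leaves $\int_{R/2<\abs{y}<2R}\abs{x-y}^{-2}\,dy$, which I would dominate by $\int_{\abs{z}\leq 3R}\abs{z}^{-2}\,dz = 12\pi R$, using that $\abs{z}^{-2}$ is locally integrable in $\R^3$ and that the annulus sits inside the ball $\abs{x-y}\leq 3R$. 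This yields a contribution of order $R^{-2-\beta}\cdot R = R^{-1-\beta}$.

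Collecting the three bounds, every piece is $O(R^{-1-\beta})$ except the inner one for $\beta\geq 1$, which is $O(R^{-2})$ (up to a logarithm at $\beta=1$); since $\alpha = \min(1/2,\beta/2) \leq \min(1,\beta)$, all of these decay at least as fast as $R^{-1-\alpha}$, and choosing $r_1 = 2r_0$ (enlarged if necessary so that the absorbed logarithm and constants are uniform) gives the claim. I expect the only delicate points to be bookkeeping the constants uniformly in $R$ and confirming that the $\beta=1$ logarithm is harmless, both of which are routine once the region where the kernel is singular has been separated out and handled by the pointwise decay of $\bvec B$.
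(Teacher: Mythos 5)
Your argument is correct, and it takes a genuinely different route from the paper's --- in fact a sharper one. The paper splits the integral into only \emph{two} pieces, at the sublinear radius $r_x = \abs{x}^{1/2}/2$ about the origin: on the ball $B_{r_x}$ it applies H\"older against $\norm{\bvec B}_{3/2}$ (producing a term of order $\abs{x}^{-3/2}$), and on the complement it inserts the pointwise decay and computes the angular integral explicitly, producing $\abs{x}^{-1-\beta/2}$; the $\abs{x}^{1/2}$ splitting scale is exactly what balances these two contributions and is the source of the exponent $\alpha = \min(1/2,\beta/2)$. Your three-region decomposition at scale $\abs{x}$ uses the $L^{3/2}$ hypothesis only on a fixed core near the origin, exploits the pointwise decay everywhere else, and handles the kernel singularity via the local integrability of $\abs{z}^{-2}$ in $\R^3$ combined with the uniform pointwise bound on $\bvec B$ over the near-diagonal annulus; this yields the stronger decay $\abs{x}^{-1-\min(1,\beta)}$ (with a harmless logarithm at $\beta=1$), which trivially implies the stated $\abs{x}^{-1-\alpha}$. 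One small bookkeeping point: if $r_0 = 0$ your ``fixed core'' is empty and $\int_{r_0}^{R/2} r^{-\beta}\,dr$ diverges at the lower limit when $\beta \geq 1$, so the core should be taken as $\abs{y}\leq \max(r_0,1)$ with $r_1 \geq 2\max(r_0,1)$; this is covered in spirit by your ``enlarged if necessary'' remark and does not affect the validity of the argument.
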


\begin{proof}
Take $r_1= \max((2 r_0)^{2},1)$.
 Take any $x$ such that $\abs{x}\geq r_1$ and define $r_x = \abs{x}^{1/2}/2 \geq r_0$.
 Split the domain of integration in the definition of $\bvec A$ in two parts and apply Hölder's inequality to the first part to obtain
 \begin{align*}
  \abs{\bvec A} (x)&\leq 4\pi \int_{B_{r_x}} \abs{\bvec B (y)} \abs{x-y}^{-2} dy 
	    + 4\pi \int_{\overline{B_{r_x}}} \abs{\bvec B (y)} \abs{x-y}^{-2} dy \\
	    &\leq 4\pi \norm{\bvec{B}}_{3/2} \left(\int_{B_{r_x}}\abs{x-y}^{-6} dy \right)^{1/3}
	    + 4 \pi C_B \int_{\overline{B_{r_x}}} \abs{y}^{-2-\beta} \abs{x-y}^{-2} dy 
 \end{align*}
The integrand in the first term is bounded, so 
\begin{align*}
 \int_{B_{r_x}}\abs{x-y}^{-6} dy &\leq \frac{4 \pi }{3} r_x^3 (\abs{x}-r_x)^{-6} \\
		  &\leq \frac{2^5 \pi}{3} \abs{x}^{-9/2}
\end{align*}

The second integral requires some more care:
\begin{align*}
 \int_{\overline{B_{r_x}}} \abs{y}^{-2-\beta} \abs{x-y}^{-2} dy 
	& =4 \pi \int_{r_x}^\infty  r^{-\beta}dr \int_{-1}^1 dt (\abs{x}^2 + r^2 - 2 r\abs{x} t)^{-1} \\
	& = 2 \pi \abs{x}^{-1} \int_{r_x}^\infty  r^{-\beta-1}  \ln\left(\frac{\abs{x}+r}{\abs{\abs{x}-r}}\right) dr \\
	& = 2 \pi \abs{x}^{-1-\beta} \int_{r_x/\abs{x}}^\infty  t^{-\beta-1}  \ln\left(\frac{1+t}{\abs{1-t}}\right) dt.
\end{align*}
This last integral is finite since for large $t$, 
the integrand is bounded by a constant times $t^{-\beta - 1}$,
while for $t$ close to $1$ it diverges only as a logarithm.
Separating the range of integration in $r_x / x \leq t \leq 1/2$ and $t > 1/2$
we note that the first part gives a contribution that behaves as $C_1 (r_x/\abs{x})^{-\beta} $ 
while the contribution of the second part can be bounded by a constant.
This means
\begin{align*}
 \int_{\overline{B_{r_x}}} \abs{y}^{-2-\beta} \abs{x-y}^{-2} dy 
 &\leq \abs{x}^{-1-\beta}\left(C_1 \left(\frac{r_x}{\abs{x}}\right)^{-\beta} + C_2 \right)  \\
 &\leq C_1 2^\beta \abs{x}^{-1- \beta/2} + C_2 \abs{x}^{-1-\beta}.
 \end{align*}
We conclude
\[
\abs{\bvec A}(x) \leq C_A (\abs{x}^{-1-1/2}+ \abs{x}^{-1-\beta/2}) \leq 2 C_A \abs{x}^{-1-\alpha}. \qedhere
\]
\end{proof}

\section{Compactness and Integrability}

Now we use a compactness-argument to find a candidate zero mode if the infimum in equation \eqref{eq : simplified_var_prob}
equals zero.

\begin{lemma}
If $\bvec B \in L^ {3/2}$,
and $\delta({\bvec{B}}) = 0$ then there exist $g \in W^{1,2}_{\mathrm{loc}} \cap L^6$ such that
 \[
 \sigvec \cdot (\bvec p - \bvec A) g = 0
 \]
 in the particular gauge for $\bvec A$ defined in \eqref{eq : def_A}.
\end{lemma}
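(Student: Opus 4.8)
The plan is to realise the infimum in \eqref{eq : simplified_var_prob} by a direct method, extracting a non-trivial weak limit whose magnetic Dirac energy vanishes. By Lemma \ref{lemma : simplified_var_prob}, $\delta(\bvec B)=0$ produces a sequence $g_n \in \Q{\Sa}$, which I may normalise so that $(g_n, \abs{\bvec B} g_n) = 1$ (both numerator and denominator are homogeneous of degree two) and along which $(g_n, \Pa g_n) \to 0$. Writing $\Sa = \Pa + \sigvec\cdot\bvec B$ and using $\abs{(g_n,\sigvec\cdot\bvec B\, g_n)} \leq (g_n,\abs{\bvec B}g_n)=1$, the Schrödinger energies $\norm{(\bvec p - \bvec A)g_n}^2 = (g_n,\Sa g_n)$ stay bounded. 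Thus $(g_n)$ is bounded in the homogeneous magnetic Sobolev space $\mathcal D$, the completion of $\Q{\Sa}$ under the norm $\norm{(\bvec p - \bvec A)\,\cdot}$, which embeds continuously into $L^6$ by the diamagnetic and Sobolev inequalities ($\norm{g}_6 \lesssim \norm{\nabla\abs{g}}_2 \leq \norm{(\bvec p-\bvec A)g}_2$). Passing to a subsequence, $g_n \rightharpoonup g$ weakly in $\mathcal D$, hence weakly in $L^6$.

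The decisive point is that the normalisation $(g_n,\abs{\bvec B}g_n)=1$ survives in the limit, i.e. that multiplication by $\abs{\bvec B}^{1/2}$ is a \emph{compact} map from $\mathcal D$ into $L^2$. This is where $\bvec B \in L^{3/2}$ enters. Given $\varepsilon>0$, split $\bvec B = \bvec B\,\mathbf 1_{\abs{x}<R} + \bvec B\,\mathbf 1_{\abs{x}\geq R}$; choosing $R$ large makes $\norm{\bvec B\,\mathbf 1_{\abs x \geq R}}_{3/2} < \varepsilon$, so by Hölder and Sobolev the tail contributes at most $\varepsilon\,\norm{g_n}_6^2$, uniformly in $n$. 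On the ball $\abs x < R$ I further split $\abs{\bvec B}$ into its truncation at height $M$ (a bounded, compactly supported weight) and a remainder of small $L^{3/2}$ norm; the truncated part is controlled by the compactness of the embedding $\mathcal D \hookrightarrow L^2_{\mathrm{loc}}$ (magnetic Rellich), which yields $\abs{\bvec B}^{1/2}\mathbf 1_{\abs x<R}\,g_n \to \abs{\bvec B}^{1/2}\mathbf 1_{\abs x<R}\,g$ strongly in $L^2$. Combining these estimates gives $\abs{\bvec B}^{1/2} g_n \to \abs{\bvec B}^{1/2} g$ in $L^2$, hence $(g,\abs{\bvec B}g) = \lim_n (g_n,\abs{\bvec B}g_n) = 1$. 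In particular $g \neq 0$. This compactness is the main obstacle: without it the minimising sequence could escape to infinity or spread out, converge weakly to $0$, and produce no zero mode.

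Since $\sigvec\cdot(\bvec p - \bvec A)$ is closed and $g_n \rightharpoonup g$ in $\mathcal D$ with $\norm{\sigvec\cdot(\bvec p-\bvec A)g_n}^2 = (g_n,\Pa g_n)\to 0$, weak lower semicontinuity of this nonnegative quadratic form gives
\[
0 \leq \norm{\sigvec\cdot(\bvec p - \bvec A)g}^2 \leq \liminf_n (g_n,\Pa g_n) = 0,
\]
so that $\sigvec\cdot(\bvec p - \bvec A) g = 0$. It remains to record the regularity. The embedding $\mathcal D \hookrightarrow L^6$ already gives $g \in L^6$. For the derivative, write $-i\nabla g = (\bvec p - \bvec A)g + \bvec A g$; the first term lies in $L^2$ because $g \in \mathcal D$, while $\bvec A g \in L^2$ by Hölder, since $\bvec A \in L^3$ (as noted after \eqref{eq : def_A}) and $g \in L^6$. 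Hence $\nabla g \in L^2$ and $g \in W^{1,2}_{\mathrm{loc}} \cap L^6$, as claimed. Note that no decay hypothesis on $\bvec B$ is used in this lemma; the decay will only be needed afterwards, to upgrade this candidate into a genuine $L^2$ zero mode.
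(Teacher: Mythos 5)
Your argument is correct and follows essentially the same route as the paper: normalise the minimising sequence from Lemma \ref{lemma : simplified_var_prob} by $(g_n,\abs{\bvec B}g_n)=1$, extract a weak limit using the diamagnetic/Sobolev bounds, keep the normalisation in the limit via $\bvec B\in L^{3/2}$, and conclude by weak lower semicontinuity. The only difference is that you spell out the compactness of multiplication by $\abs{\bvec B}^{1/2}$ (tail cut-off plus local Rellich), a step the paper's proof states in one line without justification.
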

\begin{proof}
 Take $(g_n)$ a minimizing sequence for the problem \eqref{eq : simplified_var_prob} 
 with $(g_n, \abs{\bvec{B}} g_n) = 1$.
 Then $(g_n, \Pa g_n)$ is bounded, which implies by the diamagnetic inequality that $(p g_n)$ is bounded in $L^2$
 so $(g_n)$ is bounded in $L^6$.
   By the Banach-Alaoglu theorem, 
   this guarantees the existence of a subsequence such that $p g_n$ converges weakly in $L^2$ to some $p g$
   and $ g_n \rightharpoonup  g$ weakly in $L^6$.
 Since $\abs{\bvec{B}} \in L^{3/2}$, this implies $(g, \abs{\bvec{B}} g) = 1$, so $g \neq 0$.
 In addition, since $\bvec{A} \in L^3$,
 $(\bvec{A} g_n)$ is bounded in $L^2$ so we can assume $\bvec A g_n \rightharpoonup \bvec A g$ weakly in $L^2$.
 Using the fact that $L^p$-norms are weakly lower-semi-continuous, we obtain $ \norm{\dotprod{\sigma}{(p-A)}g}_2 = 0$.
%  Also, $(g_n)$ is bounded in $W^{1,2}(B_R)$ for any $R > 0$ so by Rellich-Kondrachov we can find a further subsequence that converges strongly
%  in $L^q(B_R)$ for any $q \in [1,6)$. %% needed ???
\end{proof}

To conclude the proof of theorem \ref{thm : main} we only need to show that this candidate zero mode is in $L^2$.
This is achieved by using the decay of $\bvec{A}$ given by lemma \ref{lemma : B-decay->A-decay} in a bootstrap argument.
The procedure is not that straightforward since the decay of $\bvec A g$ 
and the Pauli equation imply only a decay of $\dotprod{\sigma}{p} g$, which does not directly imply the decay of $\bvec p g$.

\begin{lemma} \label{lemma : integrability}
  If there exist $\alpha > 0$ and $r_1 > 0$ such that
  $\abs {\bvec A}(x) < C_A \abs{x}^{-1-\alpha}$ 
  for all $x \in \R^3$ with $\abs{x} \geq r_1$
  and $g \in W^{1,2}_{\mathrm{loc}} \cap L^p$, with $p \geq 2$, is such that
 \[
 \sigvec \cdot (\bvec p - \bvec A) g = 0,
 \]
 then $g \in L^{2}$.
\end{lemma}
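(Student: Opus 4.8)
The plan is to convert the first-order equation into a representation formula built from the fundamental solution of $\dotprod{\sigma}{p}$, and then run an integrability bootstrap that exploits the extra decay of $\bvec A$. First I would rewrite the equation as $\dotprod{\sigma}{p}\, g = \dotprod{\sigma}{A}\, g =: F$ and record the pointwise bound $\abs{F} \leq \abs{\bvec A}\,\abs{g}$, which follows from the spinor identity $(\sigvec\cdot\bvec A)^2 = \abs{\bvec A}^2$. Since $(\sigvec\cdot\bvec p)^2 = -\Delta$ componentwise and $-\Delta(4\pi\abs{x})^{-1} = \delta_0$, the operator $\dotprod{\sigma}{p}$ has the explicit fundamental solution $G(x) = \tfrac{i}{4\pi}\,\dotprod{\sigma}{x}\,\abs{x}^{-3}$, whose modulus $\abs{G(x)} = (4\pi\abs{x}^2)^{-1}$ is exactly the kernel of the Riesz potential of order one. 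Thus convolution against $\abs{G}$ maps $L^s(\R^3)$ into $L^r(\R^3)$ whenever $\tfrac1r = \tfrac1s - \tfrac13$ and $1 < s < 3$, by the weak Young (Hardy--Littlewood--Sobolev) inequality. I would then establish the representation $g = G * F$: the difference $h = g - G*F$ satisfies $\dotprod{\sigma}{p}\,h = 0$, hence each component is harmonic, and since $g \in L^p$ while $G*F$ lies in some $L^r$ with $r$ finite (initially $F = \sigvec\cdot\bvec A\,g \in L^2$ gives $G*F \in L^6$), the harmonic function $h \in L^p + L^r$ must vanish by Liouville's theorem. Making this representation rigorous — checking that $G*F$ is well defined, invoking Weyl's lemma for the regularity of $h$, and applying Liouville — is the step I expect to require the most care.

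Granting the representation, I would split the potential at the radius $r_1$ as $\bvec A = \bvec A_{\mathrm{in}} + \bvec A_{\mathrm{out}}$, where $\bvec A_{\mathrm{in}} = \bvec A\,\mathbf 1_{\{\abs{x}\le r_1\}}$ has compact support and lies in $L^3$, while the decay hypothesis $\abs{\bvec A}(x) < C_A\abs{x}^{-1-\alpha}$ gives $\bvec A_{\mathrm{out}} \in L^{q_0}$ for any fixed $q_0 > 3/(1+\alpha)$. Writing $F = F_{\mathrm{in}} + F_{\mathrm{out}}$ accordingly, the inner contribution is handled once and for all: $F_{\mathrm{in}} = \sigvec\cdot\bvec A_{\mathrm{in}}\,g$ is compactly supported and, since $g \in L^6$, lies in $L^1 \cap L^2$; convolving the kernel $\abs{x}^{-2}$ against a compactly supported $L^1$ source produces a function decaying like $\abs{x}^{-2}$ at infinity, hence in $L^2$ there, and lying locally in $L^6$. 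Therefore $G*F_{\mathrm{in}} \in L^2 \cap L^6$ is a fixed good term, in particular a member of $L^a$ for every $a \in [2,6]$.

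The outer part I would treat by iterating on the integrability exponent. Suppose $g \in L^{p_k}$ with $2 \le p_k \le 6$. Choosing $q_0 \in (2,3)$ with $q_0 > 3/(1+\alpha)$, which is possible because $\alpha > 0$, Hölder's inequality gives $F_{\mathrm{out}} = \sigvec\cdot\bvec A_{\mathrm{out}}\,g \in L^{s_k}$ with $\tfrac1{s_k} = \tfrac1{q_0} + \tfrac1{p_k}$, and one checks $1 < s_k < 3$, so the mapping property yields $G*F_{\mathrm{out}} \in L^{r_k}$ with $\tfrac1{r_k} = \tfrac1{p_k} + \gamma$, where $\gamma = \tfrac1{q_0} - \tfrac13 > 0$ is a fixed gain. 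Combined with the fixed term $G*F_{\mathrm{in}} \in L^2 \cap L^6$ and the representation $g = G*F_{\mathrm{in}} + G*F_{\mathrm{out}}$, this gives $g \in L^{r_k}$ as long as $r_k \in [2,6]$. Starting from $p_0 = 6$ and raising $\tfrac1{p_k}$ by the fixed amount $\gamma$ at each step, after finitely many iterations $\tfrac1{r_k} \ge \tfrac12$; at that stage $G*F_{\mathrm{out}} \in L^{r_k}$ with $r_k \le 2$ embeds into $L^2$ at infinity, the local behaviour is controlled by $g \in L^6 \subset L^2_{\mathrm{loc}}$, and together with $G*F_{\mathrm{in}} \in L^2$ this yields $g \in L^2$. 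The two genuinely delicate points are the integral representation of the first paragraph and the bookkeeping that keeps every Hardy--Littlewood--Sobolev index in the admissible range $(1,3)$ throughout the iteration; the spinor algebra enters only through $(\sigvec\cdot\bvec v)^2 = \abs{\bvec v}^2$ and is otherwise inert.
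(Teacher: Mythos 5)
Your proposal is, in outline, a correct proof, but it follows a genuinely different route from the paper. The paper never forms an integral representation: it decomposes $g = g_+ + g_-$ along the positive and negative spectral subspaces of the spin--orbit operator $K = -1 - \dotprod{\sigma}{L}$, rewrites $\dotprod{\sigma}{p}$ as $-i\,\dotprod{\sigma}{\hat x}\bigl(\partial_r + (K+1)/r\bigr)$, and derives a coupled system of first-order ODEs for the spherical norms $\innorm{g_\pm}(r)$ on the exterior region $r \geq r_1$. Starting from the a priori decay $\innorm{g_\pm}(r) \lesssim r^{-3/p}$ (obtained from $g \in L^p$ via Jensen), the sign information $\inprodtwo{g_\pm}{Kg_\pm} \gtrless \pm \innorm{g_\pm}^2$ and the bound $\norm{\sigma_A} \leq C_A r^{-1-\alpha}$ are fed into an integral (Gr\"onwall-type) bootstrap that improves the decay by $\alpha$ at each step until $\innorm{g_\pm}(r) \lesssim r^{-2}$, which gives $g \in L^2$; a separate regularity lemma justifies the manipulations for $g$ merely in $W^{1,2}_{\mathrm{loc}}$. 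Your version replaces this radial ODE analysis with the fundamental solution $G = \tfrac{i}{4\pi}\dotprod{\sigma}{x}\abs{x}^{-3}$, a Liouville argument for $g = G*F$, and a Hardy--Littlewood--Sobolev bootstrap on Lebesgue exponents. The trade-off: the paper's method is purely local at infinity and yields the stronger pointwise conclusion $\innorm{g_\pm}(r) \lesssim r^{-2}$, at the cost of the angular-momentum machinery; yours is more standard harmonic analysis and avoids the spherical decomposition, at the cost of needing global control of $\bvec A$.

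That last point is the one real caveat. The lemma's hypotheses say nothing about $\bvec A$ on $\{\abs{x} < r_1\}$, yet your argument needs $\bvec A_{\mathrm{in}} \in L^3$ (or at least enough local integrability to define $F$ globally and run HLS on the inner piece); you are therefore proving the lemma under an extra hypothesis, which happens to hold in the paper's application since $\bvec A$ defined by \eqref{eq : def_A} lies in $L^3$, but which the paper's exterior-only argument does not require. Two smaller points to tighten: the representation $g = G*F$ and the Liouville step need the care you already flag (Weyl's lemma plus the mean-value property on balls of radius $R \to \infty$ for a harmonic function in a sum of $L^q$ spaces with finite exponents); and the phrase ``$G*F_{\mathrm{out}} \in L^{r_k}$ with $r_k \leq 2$ embeds into $L^2$ at infinity'' is not literally true --- you should either tune the last HLS exponent to land exactly on $L^2$ (possible since $\bvec A_{\mathrm{out}} \in L^q$ for all $q > 3/(1+\alpha)$) or interpolate $L^{r_k} \cap L^{6} \subset L^2$. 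With those repairs the argument goes through.
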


In order to prove this lemma, one more technical lemma will be necessary.
Its proof can be found in the appendix.
The inner product in $L^2(\s^2, \C^2)$ will be denoted by $\inprodtwo{\cdot}{\cdot}$. 
When $f$ and $g$ are defined on all of $\R^3$, we will abuse notation and write 
$\inprodtwo{f}{g}(r) \equiv \inprodtwo{f(r \bvec \omega)}{g(r \bvec \omega)}$.
We will also use the notation $\innorm{f}(r) = \inprodtwo{f}{f}^{1/2}(r)$.

\begin{lemma} \label{lemma : regularity}
If $f \in W^{1,2}_{loc}(\R^3)$ then $\innorm{f} \in W^{1,2}([a,b])$ for all $b > a > 0$,
and its weak derivative equals
\[
h(r) = \left\{ 
 		\begin{array}{ll}
              \innorm{f}^{-1}(r) \Re \inprodtwo{f}{\partial_r f}  & \text{ if } \innorm{f}(r) > 0\\
               0 & \text{ else.}
              \end{array}
              \right. 
\]
In particular $\innorm{f}$ is continuous except maybe at $0$. 
\end{lemma}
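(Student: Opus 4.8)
The plan is to fix an annulus $A_{a,b} = \{x : a \le \abs x \le b\}$ with $0 < a < b$, work in spherical coordinates $x = r\bvec\omega$, and prove the statement there; since $a,b$ are arbitrary this covers all of $(0,\infty)$. On $A_{a,b}$ one has $f \in W^{1,2}(A_{a,b})$ with radial derivative $\partial_r f = \bvec\omega \cdot \nabla f \in L^2(A_{a,b})$, and the Jacobian $\abs x^2 = r^2 \ge a^2$ lets me pass between $L^2(A_{a,b})$ estimates and $L^2([a,b])$ estimates for sphere-averaged quantities: for any $g \in L^2(A_{a,b})$ one has $\int_a^b \innorm g^2(r)\,dr \le a^{-2}\norm{g}_{L^2(A_{a,b})}^2$, and similarly for $\partial_r f$.

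First I would treat $G \equiv \innorm f^2$ rather than $\innorm f$ directly. Choosing smooth $f_n \to f$ in $W^{1,2}(A_{a,b})$, the functions $G_n = \innorm{f_n}^2$ are classically differentiable with $G_n'(r) = 2\Re\inprodtwo{f_n}{\partial_r f_n}(r)$. Using the Cauchy--Schwarz inequality on $\s^2$ pointwise in $r$ and then on $[a,b]$, together with the estimate above, I would show $G_n \to G$ and $G_n' \to 2\Re\inprodtwo{f}{\partial_r f}$ in $L^1([a,b])$; closedness of weak differentiation then gives $G \in W^{1,1}([a,b])$ with $G' = 2\Re\inprodtwo{f}{\partial_r f}$. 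In particular $G$ is absolutely continuous, hence continuous and bounded, so $\innorm f = \sqrt G$ is bounded.

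Next I would recover $\innorm f$ itself by a chain-rule argument for the non-Lipschitz map $\sqrt{\cdot}$. Setting $G_\varepsilon = \sqrt{G + \varepsilon^2}$, which is a Lipschitz function of $G$, the chain rule for Sobolev functions gives $G_\varepsilon \in W^{1,1}([a,b])$ with $G_\varepsilon' = G'/(2\sqrt{G+\varepsilon^2})$. As $\varepsilon \to 0$, $G_\varepsilon \to \innorm f$ uniformly. On $\{G > 0\}$ the derivative converges to $\innorm f^{-1}\Re\inprodtwo{f}{\partial_r f} = h$; on $\{G = 0\}$ I would invoke the standard fact that the weak derivative of a Sobolev function vanishes almost everywhere on any of its level sets, so $G' = 0$ there and $G_\varepsilon' \to 0 = h$. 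The bound $\abs{G_\varepsilon'} \le \abs{G'}/(2\sqrt G) = \abs h \le \innorm{\partial_r f} \in L^2([a,b])$ (again Cauchy--Schwarz on $\s^2$) provides an integrable dominating function, so by dominated convergence $G_\varepsilon' \to h$ in $L^1$. Hence $\innorm f \in W^{1,1}([a,b])$ with weak derivative $h$; since $h \in L^2([a,b])$ and $\innorm f$ is bounded, in fact $\innorm f \in W^{1,2}([a,b])$. The one-dimensional embedding $W^{1,2}([a,b]) \hookrightarrow C([a,b])$ then yields continuity, and letting $a \downarrow 0$, $b \uparrow \infty$ gives continuity on all of $(0,\infty)$.

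The main obstacle is exactly the passage from $G = \innorm f^2$ to $\innorm f = \sqrt G$ across the zero set $\{\innorm f = 0\}$: this is where the case distinction in the formula for $h$ originates and where the non-differentiability of $\sqrt{\cdot}$ at the origin bites. The regularization $G_\varepsilon = \sqrt{G+\varepsilon^2}$, the level-set lemma ($\nabla u = 0$ a.e. on $\{u = 0\}$), and the $L^2$ domination $\abs h \le \innorm{\partial_r f}$ are the ingredients that make this limit rigorous; everything else is routine approximation and book-keeping in spherical coordinates.
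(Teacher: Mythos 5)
Your proof is correct and follows essentially the same route as the paper's: both approximate $f$ by smooth functions on the annulus, use the Jacobian bound $r^2\geq a^2$ to pass between $L^2(A)$ and $L^2([a,b])$, regularize the square root as $(\innorm{f}^2+\varepsilon^2)^{1/2}$, and remove the regularization via the Cauchy--Schwarz domination $\abs{h_\varepsilon}\leq \innorm{\partial_r f}\in L^2([a,b])$. The only organizational difference is that you first establish $\innorm{f}^2\in W^{1,1}([a,b])$ and then invoke the Sobolev chain rule together with the level-set lemma, whereas the paper works directly with the regularized quantity and needs no level-set lemma, since $\Re\inprodtwo{f}{\partial_r f}(r)$ vanishes automatically wherever $\innorm{f}(r)=0$ (there $f(r\,\cdot)=0$ a.e.\ on the sphere).
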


\begin{proof}[Proof of lemma \ref{lemma : integrability}]

Define 
\[
K = - 1 - \dotprod{\sigma}{L}.
\]
which can be considered as a self-adjoint operator on $L^2(\s^2,\C^2)$ with eigenvalues $\pm 1, \pm 2, \dots $ 
(see for instance \cite{Johnson}, section 1.5).
Write $g = g_+ + g_-$ where $\inprodtwo{g_+}{Kg_+} > 0$ and $\inprodtwo{g_-}{Kg_-} < 0$. 
If $g \in L^p(\R^3)$, there exists $C > 0$ such that
\[
\int_{\s^2} \abs{g}^p (r \omega) d \omega \leq C r^{-3}.
\]
By Jensen's inequality, this implies
\begin{align*}
C r^{-3} \geq \int_{\s^2} \abs{g}^p (r \omega) d \omega 
	&\geq (4 \pi)^{1-p/2} \left( \int_{\s^2} \abs{g}^2 (r \omega) d \omega \right)^{p/2} \\
	&=  (4 \pi)^{1-p/2} \bigl( \inprodtwo{g_+}{g_+} + \inprodtwo{g_-}{g_-} \bigr)^{p/2},
\end{align*}
so both $\innorm{g_+}(r)$ and $\innorm{g_-}(r)$ decay as $ C r^{-3/p}$.

At first, we will prove the theorem in the case that $g_+$ and $g_-$ are $C^2$\nobreakdash-\hspace{0pt}functions.
The Pauli operator can be written conveniently as
\[
\dotprod{\sigma}{p} = (\dotprod{\sigma}{\hat x})^2\dotprod{\sigma}{p} 
= - i \dotprod{\sigma}{\hat x} \left(\partial_r + \frac{K+1}{r}\right),
\]
where the operator inside the parenthesis commutes with $K$.
This allows to rewrite the equation for $g$ as
\[
\partial_r g + \frac{K+1}{r} g = i \dotprod{\sigma}{\hat x} \dotprod{\sigma}{A} g.
\]
For shortness, define $\sigma_A = \dotprod{\sigma}{\hat x} \dotprod{\sigma}{A}$.
The only property of this matrix needed is $\norm{\sigma_A (r \omega)} \leq C_A r^{-1-\alpha}$ when $r \geq r_1$.
Taking the $\C^2$ product with $g_+$ and $g_-$ and integrating over $\s^2$, we obtain
\begin{align} \label{eq : int_S2}
 \inprodtwo{g_+}{\partial_r g_+}(r) & = -\frac{1}{r}\inprodtwo{g_+}{(K+1)g_+}(r) + i \inprodtwo{g_+}{\sigma_A (g_+ + g_-)} \\
  \inprodtwo{g_-}{\partial_r g_-}(r) & = -\frac{1}{r} \inprodtwo{g_-}{(K+1)g_-}(r) + i \inprodtwo{g_-}{\sigma_A (g_+ + g_-)}. \nonumber
\end{align}

By taking the real part of these equations, we obtain a differential equation for $\innorm{g_+}$ and $\innorm{g_-}$:

\begin{align*}
 \frac{d}{dr} \innorm{g_+}^2 & = -2 \frac{1}{r}\inprodtwo{g_+}{(K+1)g_+}(r) - 2 \Im \inprodtwo{g_+}{\sigma_A (g_+ + g_-)} \\
\frac{d}{dr} \innorm{g_-}^2 & = -2\frac{1}{r} \inprodtwo{g_-}{(K+1)g_-}(r) - 2 \Im \inprodtwo{g_-}{\sigma_A (g_+ + g_-)}.
\end{align*}

Defining $ \bar g_+ = g_+ r^{2}$ we get the system of equations
\begin{align*}
 \frac{d}{dr} \innorm{\bar g_+}^2 & = -2 \frac{1}{r}\inprodtwo{\bar g_+}{(K-1)\bar g_+}(r) 
	      - 2 \Im \inprodtwo{\bar g_+}{\sigma_A (\bar g_+ r^2 g_-)} \\
\frac{d}{dr} \innorm{ g_-}^2 & = -2\frac{1}{r} \inprodtwo{g_-}{(K+1)g_-}(r) - 2 \Im \inprodtwo{g_-}{\sigma_A ( r^{-2}\bar g_+ + g_-)}.
\end{align*}

Fix $r \geq r_1$.
We now use a bootstrap argument to obtain $\innorm{g_\pm}(r) \leq C r^{-2}$.
As remarked previously $\innorm{g_-}^2(r) \leq C r^{-\epsilon}$ and $\innorm{\bar g_+}^2(r) \leq C r^{4-\epsilon}$ with $\epsilon = 3/p$. 
We will see the equations imply $\innorm{ g_-}^2(r) \leq C' r^{-\epsilon-\alpha}$ 
and $\innorm{\bar g_+}^2(r) \leq C' r^{4-\epsilon_1}$ where $\epsilon_1 = \min (\epsilon + \alpha, 4)$.

For $\bar g_+$, we can use $\inprodtwo{\bar g_+}{K \bar g_+} \geq \innorm{\bar g_+}^2$ in order to obtain
\begin{align*}
 \innorm{\bar g_+}^2(r) &= \int_{r_1}^r -2 s^{-1}\inprodtwo{\bar g_+}{(K-1)\bar g_+}(s) 
                         - 2 \Im \inprodtwo{\bar g_+}{\sigma_A (\bar g_+ + s^2 g_-)} ds + C_1                       \numberthis \label{eq : integ_f} \\
	& \leq 2 \int_{r_1}^r \abs{\inprodtwo{\bar g_+}{\sigma_A \bar g_+}(s) } 
	                  + s^2 \abs{\inprodtwo{\bar g_+}{\sigma_A g_-}(s) } d s + C_1 \\
	& \leq 4 C C_A \int_{r_1}^r s^{4-\epsilon -1-\alpha} + C_1 \\
	& = \frac{4 C C_A}{4 - \epsilon - \alpha}(r^{4 - \epsilon - \alpha}-1) + C_1.
\end{align*}
For $g_-$, we can use the fact $\innorm{g_-}$ tends to zero as $r \to \infty$ and  $\inprodtwo{g_-}{K g_-} \leq - \innorm{g_-}^2$ to write
\begin{align*}
 \innorm{g_-}^2(r)& = \int_r^\infty + 2 s^{-1}\inprodtwo{g_-}{(K+1)g_-}(s) 
			    + 2 \Im \inprodtwo{g_-}{\sigma_A (s^{-2}\bar g_+ +  g_-)} ds  \numberthis \label{eq : integ_h}\\
	& \leq \int_r^\infty  2 \abs{ \inprodtwo{g_-}{\sigma_A (s^{-2}\bar g_+ +  g_-)}} ds  \\
	& \leq 2 C C_A \int_r^\infty  2 s^{- \epsilon - 1 - \alpha} ds \\
	& = \frac{2 C C_A}{\epsilon + \alpha} r^{- \epsilon - \alpha}.
\end{align*}
By iterating this procedure a finite number of times we reach the conclusion $\innorm{g_+}(r) \leq C r^{-2}$ and $\innorm{g_-}(r) \leq C r^{-2}$,
so $g \in L^2(\R^3)$.
This concludes the proof of the lemma when $g_+$ and $g_-$ are $C^2$-functions.

In the general case, $g$ has a decomposition in a series of spherical spinors (see for example \cite{Johnson}, section 1.5)
where the coefficients are functions of $r$ belonging to $W^{1,2}_{loc}(\R_+, r^2 dr)$.
By taking the projections on the positive and negative eigenspaces of $K$ and using dominated convergence, 
we conclude $g_+$ and $g_-$ are in $W^{1,2}_{loc}(\R^3)$. 
Thus, by Fubini's theorem, $g_\pm$, and $\partial_r g_\pm $ are in $L^2(\s^2(r))$ for almost every $r > 0$.
This justifies the integration over $\s^2$ used to obtain \eqref{eq : int_S2}.

By lemma \ref{lemma : regularity}, $\innorm{g_+}$ and $\innorm{g_-}$ are in $ W^{1,2}([a,b])$ for any $b > a > 0$ and thus continuous.
The use of the fundamental theorem of calculus in 
\eqref{eq : integ_f} can be justified by applying it to a sequence of $C^\infty$-functions converging to $\innorm {g_+}$
pointwise and in $ W^{1,2}([r_1,r])$. 
In the same way we can obtain $\innorm {g_-}^2(r) = -\int_{r}^{r_2} \frac{d}{dr} \innorm{g_-}^2(r) dr  + \innorm {g_-}^2(r_2)$
for any $r_2 > r > r_1$.
Since $\frac{d}{dr} \innorm{g_-}^2(r)$  is in $L^1([r, +\infty))$, we can let $r_2 \to \infty$ in order to obtain \eqref{eq : integ_h}.

\end{proof}

\section*{Acknowledgements}

Work partially supported by Fondecyt (Chile) project 112--0836 
and the Iniciativa Cient\'\i fica Milenio (Chile) through the Millenium Nucleus RC--120002 ``F\'\i sica Matem\'atica'' (R.D.B and H.VDB) 
and partially by Conicyt (Chile) through CONICYT-PCHA/Doctorado Nacional/2014 
and Beca Ayudante de VRI (H.VDB).

\section*{Appendix A: proof of lemma \ref{lemma : regularity}}

 By Fubini's theorem, $f$ and $\partial_r f$ are in $L^2(\s^2(r))$ for almost every $r > 0$ and 
 $\innorm{f}$, $\innorm{\partial_r f}$ are in $L^2_{loc}(\R_+, r^2 dr)$.
 Fix $b > a > 0$ and define the annulus $A = \{ x \in \R^3 |  a \leq \abs x \leq b \}$.
 
 Fix $\epsilon > 0$.
 As a first step, we will prove $f_\epsilon  \equiv (\innorm{f}^2 +\epsilon)^{1/2} \in   W^{1,2}([a,b])$. 
 Define $h_{\epsilon} = \innorm{f_\epsilon}^{-1} \Re \inprodtwo{f}{\partial_r f}$.
 By the Cauchy-Schwarz inequality $h_\epsilon $ is in $L^2([a, b])$.
 It remains to check whether $h_\epsilon $ is the distributional derivative of $f_\epsilon$.
 
 To this end, take a sequence $(f_n) \subset C^1(A, \C^2)$ 
 approaching $f$ in $W^{1,2}(A)$ and pointwise almost everywhere in $A$.
 This means $\innorm{f_n} \to \innorm{f}$ in $L^2([a,b])$
 so by extracting a subsequence we may assume $\innorm{f_n}(r) \to \innorm{f}(r)$ for almost every $r \in [a,b]$.
 Define $h_n  \equiv \partial_r (\innorm{f_n}^2+ \epsilon) ^{1/2}$. We have 
 $h_n = (\innorm{f_n}^2+ \epsilon)^{-1/2} \Re \inprodtwo{f_n}{\partial_r f_n}(r)$.
 In order to conclude, we should prove that, for any test function $\phi \in C_0^\infty ([a,b])$,
 \[
 \int_{a}^{b} \phi(r) h_n(r) dr 
      \to 
	    \int_{a}^{b} \phi(r) h_{\epsilon}(r) dr \quad \text{ as } n \to \infty.
 \]
 
To achieve this, fix $\phi \in C_0^\infty ([a,b])$ and define $\Phi_n =(\innorm{f_n}^2+ \epsilon)^{-1/2} \phi f_n $ 
and $\Phi_{\epsilon} = f_\epsilon^{-1} \phi f $.
$\Phi_n(x)$ converges to $\Phi_{\epsilon}(x)$ 
when $x \in A$ is such that $\innorm{f_n}(\abs x)$ converges to $\innorm{f}(\abs x)$ and $f_n(x) \to f(x)$,
which holds for almost every $x$ in $A$.
Since $\bigl(\Phi_n \bigr)$ is bounded in $L^2(A)$, by dominated convergence $\Phi_n \to \Phi_\epsilon$ in $L^2(A[a,b])$.
This allows us to obtain
\begin{align*}
 \int_{a}^{b}  \abs{\phi (h_n - h_\epsilon) }
	& \leq \int_{a}^{b} \abs{\inprodtwo{\Phi_n}{\partial_r f_n}- \inprodtwo{\Phi_\epsilon}{\partial_r f}} \\
	& \leq \int_{a}^{b} \abs{\inprodtwo{\Phi_n-\Phi_\epsilon}{\partial_r f_n}}
	       +\int_{a}^{b} \abs{\inprodtwo{\Phi_\epsilon}{\partial_r f_n-\partial_r f}} \\
	& \leq a^{-2} \norm{\Phi_n - \Phi_\epsilon}_{2,A} \norm{\partial_r f_n}_{2,A} 
	   + a^{-2} \norm{\Phi_\epsilon}_{2,A} \norm{\partial_r f_n- \partial_r f}_{2,A}.
 \end{align*}
In the last line, we used $1 \leq a^{-2} r^2 $ in the domain of integration to transform the integral over an interval in an integral over $A$.
Since $f_n$ tends to $f$ in $W^{1,2}(A)$, the second term tends to zero and the second factor of the first term is bounded.
As previously remarked, $\Phi_n - \Phi_\epsilon$ tends to zero in $L^2(A)$ so the first term goes to zero too.
This means $f_\epsilon \in W^{1,2}([a,b])$ and its distributional derivative equals $h_\epsilon$.

Now, we can let $\epsilon$ tend to zero. 
Then $f_\epsilon \to \innorm{f}$  and
$h_\epsilon (r) \to h(r)$ in $L^2([a,b])$.
We conclude $\innorm{f} \in W^{1,2}([a,b])$ and $h = \frac{d}{dr} \innorm{f}$.
\qed

\bibliography{references2} 
\bibliographystyle{amsplain}

\end{document}